\documentclass[onecolumn]{IEEEtran}
\usepackage{amsmath,amssymb,amsthm,mathtools}
\usepackage{cite}
\usepackage{verbatim}
\usepackage[hidelinks]{hyperref}
\newcommand{\R}{\mathbb{R}}
\newcommand{\GOE}{\operatorname{GOE}}
\newcommand{\avg}[1]{\left\langle #1\right\rangle}
\newcommand{\dd}{\mathop{}\!\mathrm{d}}

\newtheorem{theorem}{Theorem}
\newtheorem{lemma}[theorem]{Lemma}
\newtheorem{proposition}[theorem]{Proposition}
\newtheorem{corollary}[theorem]{Corollary}
\newtheorem{remark}[theorem]{Remark}

\title{On the Cost of Entrainment in Protein Translation}

\author{Ram~Massas and Michael~Margaliot%
\thanks{The authors are with the School of Electrical and Computer  Engineering,
Tel Aviv University, Tel Aviv 69978, Israel.
Corresponding author: Michael Margaliot (michaelm@tauex.tau.ac.il).}%
\thanks{This research was partially supported by the Israel Science
Foundation under Grant 221/24.}}

\begin{document}

\maketitle
\begin{abstract}
    Biological systems often synchronize their dynamics with periodic environmental and intracellular signals. Whether such periodic  coordination can also improve performance, however, remains unclear. Here, we study this question in the ribosome flow model, a nonlinear dynamical model of ribosome movement along an mRNA transcript during translation. We compare the average protein production rate under positive periodic transition rates with that of a constant-rate system obtained by replacing each rate by its temporal average. We prove that periodic modulation can never increase average protein production: the gain of entrainment is always nonpositive. Moreover, the gain is zero if and only if all transition rates share a common positive periodic modulation of their mean values. This exceptional modulation merely reparametrizes time and leaves the trajectory through state space unchanged. Thus, any genuinely nonuniform temporal modulation strictly reduces average protein production. Our results establish a fundamental tradeoff between temporal coordination and translational efficiency: entrainment can synchronize translation with periodic cellular programs, but this synchronization comes at a quantifiable production cost.
\end{abstract}

\begin{IEEEkeywords}
Gain of entrainment, periodic control, protein translation,
ribosome flow model, nonlinear compartmental systems.
\end{IEEEkeywords}

\section{Introduction}
Biological systems are 
exposed to changing 
environments. Nutrient availability, cellular resources, and regulatory signals often vary in a periodic pattern. Such temporal variation can synchronize molecular processes with recurring environmental or intracellular programs e.g. the 24h solar day. But can periodic  coordination also improve performance?

To address this question rigorously, 
consider a nonlinear dynamical system with state vector \(x\), input \(u\), and a scalar output
\[
y=h(x,u).
\]
We say that the system \emph{entrains} if, for every \(T\)-periodic input~\(u\), every solution~\(x(t)\) converges to a unique \(T\)-periodic orbit~\(\gamma \). Consequently, the output converges to the \(T\)-periodic signal
\[
y^\gamma(t):=h(\gamma(t),u(t)).
\]

Entrainment is important in both natural and artificial systems, as it allows synchronization with periodic external or internal signals. Synchronous generators must entrain to the frequency of the grid. Biological processes like  the wake-sleep cycle  must entrain to the 24h solar day. Suppose now that  our goal is to maximize the average output
\[
\frac{1}{T}\int_0^T h(\gamma(t),u(t))dt. 
\]
For example, the output may represent the 
flow of vehicles  along a road
controlled by  $T$-periodic traffic lights, and our goal is to improve the average flow. 

A natural question is whether 
entrainment can also improve the output, on average.  The notion of gain of entrainment~(GOE) defines this rigorously  as follows.
Let~$\avg{u}:=\frac1{T}\int_0^T u(t)dt$, that is, the average input. Since the system entrains, for this input the state vector converges to an equilibrium~$e$, and the output converges to
the constant value 
\[
 h(e,\avg{u}).
\]
The gain of entrainment for the control~$u$  is defined  as
\[
\GOE(u):=
\frac{1}{T}\int_0^T h(\gamma(t),u(t))dt-h(e,\avg{u}).
\]
Thus, a positive GOE implies that entrainment not only synchronizes the system to periodic excitations, but also improves the output, on average. Since constant controls are a special case of $T$-periodic control, one may suspect that GOE is always nonnegative, but this is not true. 

Analyzing  the
GOE is an  important theoretical question with many potential  applications including   periodic fishery,
photo-bioreactors     modulated using periodic light,   traffic flow regulated  using periodic traffic lights, medications administered in a periodic pattern, and more. 

The difficulty  in analyzing GOE stems from the fact that in nonlinear systems the 
equilibrium~$e$ and the periodic  orbit~$\gamma$ are not known  explicitly.  GOE can also be posed as an  optimal control theory, 
but solutions exist only in simple cases, e.g. scalar systems~\cite{rapo_convex}.
Ref.~\cite{katz2023gainentrainmentclassweakly} considered  a specific class of weakly contractive systems, and 
 derived a closed-form formula for~GOE to first-order in the control perturbation. This was  used to show  that~GOE is always a second- or higher-order phenomenon.  
Ref.~\cite{massas2026gainentrainmentstablelinear}  analyzed GOE in the special  class of systems with a linear dynamics and a nonlinear output function, and related GOE to the convexity/concavity of the output function.

Here, we analyze GOE in the 
  ribosome flow model~(RFM) which is a deterministic nonlinear model for the 
unidirectional flow of particles along  a chain of~$n$ ordered sites. The~RFM is the dynamic  
mean-field approximation of
a fundamental model from statistical mechanics called the totally asymmetric simple exclusion process~(TASEP)~\cite{solvers_guide,kriecherbauer_krug2010}. The RFM 
has been extensively 
used to model and analyze ribosome movement along   the mRNA molecule (see e.g.~\cite{reuveni2011genome,margaliot2012stability,kaminer2026turnpikepropertyeigenvalueoptimization,rfm_max,rfm_sense,EYAL_RFMD1}). More recently, networks of interconnected~RFMs have been used to study large-scale translation in the cell and the effect of competition for shared resources like ribosomes and initiation
factors~\cite{Raveh2016,fierce_compete,aditi_networks}.

Gene expression  is 
 excited by periodic signals  due to circadian rhythms 
 and the periodic cell-cycle program~\cite{goldbeter_book}. 
Furthermore,
synthetic genetic oscillators have  become an important paradigm for studying  and designing  biological clocks~\cite{elowitz2000synthetic,purcell2010comparative}.
Entrainment to periodic signals is important in gene expression because it enables biological systems to synchronize their internal dynamics with recurring  periodic  signals. 
This motivates the study of the~RFM with positive transition rates that are
time-varying and jointly $T$-periodic,  
e.g. due to the periodic 
cell-cycle program or 
engineered periodic control.
It is known that the periodic~RFM entrains: there exists a  unique $T$-periodic solution of the system that is globally asymptotically stable~\cite{RFM_entrain} (for more   on entrainment in biological systems, see~\cite{entrain_trans}).
In other words,  in response to $T$-periodic 
transition rates, due e.g. to the periodic cell-cycle program or periodic demands on shared resources, the ribosome densities and the protein production rate in the RFM will converge to a unique~$T$-periodic pattern. 

Previous work~\cite{GOE_RFM_2023} has shown that in several  special cases~GOE
in the~RFM is nonnegative,
leading to the conjecture  that~GOE 
in the~RFM is always nonnegative.
The main difficulty in proving this conjecture stems from the
fact that 
 the~RFM is a nonlinear dynamical model, and the time average of its periodic solution is not generally the
equilibrium associated with the averaged rates.  

This paper resolves  this  conjecture. Its contributions are threefold: 
(1)~we derive an exact identity representing the GOE in the~RFM 
as a weighted sum of
edge-wise dissipation terms;
(2)~we prove that every dissipation  term is
nonnegative   yielding~$\GOE\leq0$ for
arbitrary positive and jointly $T$-periodic rates; and (3)~we explicitly  characterize the case where the GOE is zero, namely, 
 when all the rates are obtained from their averages
by a single common positive periodic scalar modulation. Such a modulation only
changes the speed at which the constant-rate dynamics is traversed. Thus, in any non degenerate  case, GOE in the~RFM is negative implying that periodic modulation   always decreases  the average production rate. 
This establishes a fundamental tradeoff between temporal coordination and production efficiency: entrainment can synchronize translation with periodic cellular signals, but   this always  comes with a cost in terms of protein production. 

The remainder of the paper is organized as follows. Section~\ref{sec:pri} reviews  the
periodic RFM~(PRFM) and the problem of~GOE in the~PRFM.  Section~\ref{sec:main} states the main result. Section~\ref{sec:proof} derives
the dissipation identity,  proves nonpositivity of GOE and characterizes the case of zero~GOE.    To demonstrate the theoretical results,
a numerical example is provided in Section~\ref{sec:numerical}. 
The final section concludes and describes possible  directions for further research.

\section{Model and Problem Formulation}\label{sec:pri}
In this section, we   review~GOE in the~PRFM.

\subsection{ Periodic Ribosome Flow Model}
The $n$-dimensional RFM
describes the unidirectional flow of particles along an ordered chain of~$n$
sites. The density of particles at site~$i$ at time~$t$ is described by the state
variable~$x_i(t)$, which is normalized such that~$x_i(t)\in[0,1]$ for all time~$t$, with~$x_i(t)=0$ 
[$x_i(t)=1$] corresponding to the case where the site is completely free [full].

The   rate of flow from site~$i-1$ to site~$i$ at time $t$ is given by
\[
\lambda_{i-1}(t)x_{i-1}(t)\bigl(1-x_i(t)\bigr),
\]
where~$\lambda_{i-1}(t)$
is the transition rate 
between the sites at time~$t$.
The flow  rate is thus proportional to the density at site~$i-1$ and the   ``free space'' $1-x_i(t)$ in site~$i$. 
In particular, if site~$i$ begins to fill up i.e.~$x_i(t) $
goes to one,  then the flow rate into site~$i$ goes to zero, 
and particles will start to accumulate in site~$i-1$, and then at site~$i-2$ and so on. Thus, the RFM provides a natural framework for modeling and analyzing the evolution of ``traffic jams'' of particles.

In the context of translation, the particles are ribosomes and the chain of sites is the mRNA molecule divided into (groups of) consecutive codons.  Ribosomal traffic jams, which arise when slow or stalled ribosomes impede trailing ribosomes   can lead to collisions, impaired translational efficiency, and activation of ribosome-associated quality-control pathways. Such processes have been implicated in a range of pathological conditions, including neurodegenerative diseases and cancer~\cite{schuller2018roadblocks,FILBECK20221451,subramaniam2021ribosome}.

The RFM is described by~$n$ nonlinear ordinary differential equations:
\begin{align}
\dot{x}_i(t)
&=\lambda_{i-1}(t)x_{i-1}(t)\bigl(1-x_i(t)\bigr) -\lambda_i(t)x_i(t)\bigl(1-x_{i+1}(t)\bigr),
\quad i=1,\ldots,n,
\label{eq:rfm}
\end{align}
with the boundary conventions
\begin{equation*}
x_0(t):=1,
\qquad
x_{n+1}(t):=0.
\end{equation*}
Eq.~\eqref{eq:rfm} states that the change in density in site~$i$ is the flow from site~$i-1$ to site~$i$ minus the flow from site~$i$ to site~$i+1$. 

The flow exiting the last site, namely,
\[
R(t):=\lambda_n(t)x_n(t)
\]
is the protein production rate at time $t$, since each ribosome exiting the last site completes translation and releases one protein molecule.

Note that~$x_i$ has no units, and the~$\lambda_i$s have units of~$1/\text{time}$. 
The state-space of the RFM is the unit cube~$[0,1]^n$ and any solution emanating from~$x(0)\in [0,1]^n$ satisfies~$x(t)\in [0,1]^n$ for all~$t\geq 0$.

 When using the RFM to model ribosome flow along a transcript, the transition rates~$u_i$ are determined by various biophysical features of the translation process, e.g. the abundance of cognate tRNA molecules, and   the availability and activity of translation factors (see e.g.~\cite{reuveni2011genome,Zur2020}). These rates may therefore vary along the transcript and may also change in response to the physiological state of the cell. Consequently, translation can be viewed as a dynamical process in which the effective transition rates are subject to temporal modulation. An important question is  whether periodic changes in the cellular environment can enhance or impair protein synthesis.

Fix~$n+1$ 
continuous, positive  and $T$-periodic functions of time~$u_0(t),\dots,u_n(t)$. 
In the periodic RFM~(PRFM), the transition rates are
  $\lambda_i(t)=u_i(t)$ for all~$i$. 
In particular, all the transition rates are jointly~$T$-periodic. 
The PRFM admits a unique~$T$-periodic solution
\begin{equation*}
\gamma(t)
:=\begin{bmatrix}
\gamma_1(t)&\dots&\gamma_n(t)
\end{bmatrix}^{\top},
\qquad
\gamma(t+T)=\gamma(t),
\end{equation*}
with $\gamma_i(t)\in(0,1)$ for all~$t$,  and   for any initial
condition~$x(0)\in[0,1]^n$ the solution of the~PRFM converges to~$\gamma$
(see~\cite{RFM_entrain}).

For a $T$-periodic 
function~$f$, let
\[
\avg{f}:=\frac1{T}\int_0^T f(t)\,dt
\]
be the time average  of~$f$. Then the average production rate in the~PRFM is 
\begin{equation*}
R_P:=\avg{u_n\gamma_n},
\end{equation*}
where the subscript
``P'' stands for periodic. 

\subsection{GOE in the RFM}
We compare the average  protein production rate~$R_P$ in the~PRFM to the production rate obtained in the constant-rates RFM,  where each transition rate is  the constant function~$\lambda_i(t) =\avg{u_i}$ for all~$t\geq 0$. In other words, every periodic transition rate is replaced by a constant function which is its time average. 

   The constant-rates RFM  admits a unique equilibrium
\begin{equation*}
e:=\begin{bmatrix}e_1&\dots&e_n\end{bmatrix}^{\top},
\qquad e_i\in(0,1),
\end{equation*}
and the solution emanating from any~$x(0)\in[0,1]^n$ converges to~$e$~\cite{margaliot2012stability}.
At equilibrium, 
the flows into and out of each site are equal, so~\eqref{eq:rfm} gives 
\begin{align}\label{eq:stedy}
\avg{u_0} (1-e_{1})& =\avg{u_1} e_1(1-e_{2})\nonumber\\
&\vdots\\
&=\avg{u_{n-1}} e_{n-1}(1-e_{n})\nonumber\\
&=
\avg{u_n} e_n.\nonumber
\end{align}
The 
 steady-state production rate is 
$
R_C:=\avg{u_n} e_n, 
$
where the subscript ``C'' stands for
 constant.  For each edge, introduce
the steady state current
\begin{align*}
c_i&:=e_i(1-e_{i+1}) ,\quad i=0,\dots,n.
\end{align*}
Then~$c_i\in(0,1)$, and
\begin{equation}
R_C=\avg{u_i} c_i,
\qquad i=0,\ldots,n.
\label{eq:Rc-aici}
\end{equation}

The gain of entrainment for the $T$-periodic control~$u=\begin{bmatrix}
    u_0&\dots&u_n\end{bmatrix}$ is
\begin{equation*}
\GOE(u):=R_P-R_C.
\end{equation*}
Thus, a positive GOE would mean that periodic modulation  yields a larger
average output than constant modulation  with the same average rates. In other words, entrainment not only helps  synchronizing  translation with periodic excitations, but also increases the  average translation efficiency. 

The RFM can also be interpreted as a model of one-dimensional vehicular traffic, with the transition rates representing the effects of traffic lights that vary periodically in time, with a common  period~$T$. In this setting, it is natural to expect that a clever coordination of the temporal traffic patterns could increase the average flow, corresponding to a positive ~GOE. Surprisingly, our main result shows that this intuition is false: periodic modulation of the transition rates can never increase the average flow, as the   GOE is always nonpositive.

\section{Main Result}\label{sec:main}

\begin{theorem}[Periodic modulation cannot increase average production]
\label{thm:main}
For any dimension~$n$ and any set of continuous, positive and~$T$-periodic translation rates~$u_i(t)$, $i=0,\dots,n$, we have   
\begin{equation*}
\GOE(u)\leq 0,
\end{equation*}
with equality  only when all the periodic rates can be written as
\[
u_i(t)=m(t) r_i,\quad i=0,\dots, n,
\]
where every~$r_i>0$ is a constant, and~$m(t)$ is a   positive $T$-periodic function.
\end{theorem}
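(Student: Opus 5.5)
The plan is to obtain an exact averaged identity from a relative-entropy-type storage function evaluated along the periodic orbit~$\gamma$. The identity should express $R_C-R_P$ as a sum of nonnegative edge-wise terms.

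\textbf{Step 1: all edge currents have the same average.} Write $f_i(t):=u_i(t)\gamma_i(t)(1-\gamma_{i+1}(t))$ for $i=0,\dots,n$, with $\gamma_0:=1$ and $\gamma_{n+1}:=0$. Periodicity gives $\avg{\dot\gamma_i}=0$, so~\eqref{eq:rfm} yields $\avg{f_i}=R_P$ for every~$i$. This is the periodic analogue of~\eqref{eq:stedy}. Together with~\eqref{eq:Rc-aici}, it lets us compare $R_P$ and $R_C$ edge by edge through the normalized weights $u_i/\avg{u_i}=u_i c_i/R_C$.

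\textbf{Step 2: the storage function.} Take
\[
V(x):=\sum_{i=1}^n\Bigl[x_i\log\tfrac{x_i}{e_i}+(1-x_i)\log\tfrac{1-x_i}{1-e_i}\Bigr].
\]
Then $\dot V=\sum_i(f_{i-1}-f_i)\,\ell_i$ with $\ell_i:=\log\frac{x_i(1-e_i)}{(1-x_i)e_i}$. After summation by parts this becomes $\dot V=\sum_{i=0}^n f_i(\ell_{i+1}-\ell_i)$, with suitable boundary conventions for $\ell_0$ and $\ell_{n+1}$; the endpoint terms need checking. Each increment has the form
\[
\ell_{i+1}-\ell_i=\log z_i,\qquad z_i:=\frac{\gamma_{i+1}(1-\gamma_i)\,c_i}{\gamma_i(1-\gamma_{i+1})\,e_{i+1}(1-e_i)}.
\]
The orbit is periodic, so $\avg{\dot V(\gamma)}=0$.

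\textbf{Step 3: the inequality.} Apply $\log z\le z-1$ termwise. This gives
\[
0=\avg{\sum_i f_i\log z_i}\le\sum_i\avg{f_i(z_i-1)}.
\]
Here $f_iz_i=u_ic_i\,\gamma_{i+1}(1-\gamma_i)/(e_{i+1}(1-e_i))$ is a ``reverse current'', and $\sum_i\avg{f_i}=(n+1)R_P$. The main obstacle is to show that the reverse-current sum $\sum_i\avg{f_iz_i}$ telescopes, or is bounded, to $(n+1)R_C$. This would give $R_P\le R_C$, with an exact identity $R_C-R_P=\frac{1}{n+1}\sum_i\avg{f_i(z_i-1-\log z_i)}$ of nonnegative dissipation terms.

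With constant rates, the equilibrium relations~\eqref{eq:stedy} make this work. With periodic $u_i$ they do not directly, because the weights $u_ic_i$ differ across edges at each time. If the first choice fails, I would try two alternatives:
\begin{itemize}
\item replace the uniform weighting by edge-dependent weights $\alpha_i$ in $V$;
\item normalize time by a suitable average-rate clock, so that the mismatch between $u_i/\avg{u_i}$ across edges shows up as an additional nonnegative Jensen-type term.
\end{itemize}

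\textbf{Step 4: equality case.} Equality forces $z_i\equiv1$ for all $i$ and any extra Jensen term to vanish. From $z_i\equiv1$ I expect to deduce that $\gamma\equiv e$. The equation $0=\dot\gamma_i=f_{i-1}-f_i$ then forces $u_{i-1}c_{i-1}=u_ic_i$ for all~$i$. Setting $m(t):=u_0(t)c_0/R_C$ and $r_i:=R_C/c_i$ gives $u_i(t)=m(t)r_i$.

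Conversely, for such rates the time change $\tau=\int_0^t m$ reduces the PRFM to the constant RFM with rates $r_i$. Its equilibrium is again $e$, since $\avg{u_i}=\avg{m}r_i$ is a common rescaling. Then $R_P=\avg{m}r_ne_n=R_C$, so $\GOE(u)=0$.
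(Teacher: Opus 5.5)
\paragraph{Verdict: there is a genuine gap.} Your Step~3 is not just unfinished: the target you set there is false, so the argument as designed cannot close.

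\paragraph{The $n=1$ case shows the target is false.} Read your $z_0,z_n$ with the conventions $\ell_0=\ell_{n+1}=0$, since your formula is literally $0/0$ at the endpoints. Then the case $n=1$ can be computed exactly. Put $A=\avg{u_0}$ and $B=\avg{u_1}$. One finds $f_0z_0=u_0\gamma_1B/A$ and $f_1z_1=u_1(1-\gamma_1)A/B$. Since $\avg{u_0(1-\gamma_1)}=\avg{u_1\gamma_1}=R_P$,
\[
\sum_{i=0}^{1}\avg{f_iz_i}=2R_C+\frac{A^2+B^2}{AB}\,(R_C-R_P).
\]
So the reverse-current sum strictly exceeds $(n+1)R_C$ whenever $\GOE(u)<0$. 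Your proposed identity with factor $1/(n+1)$ is also wrong: for $n=1$ the correct factor is $e_1(1-e_1)$. Here $2R_P\le\sum_i\avg{f_iz_i}$ still yields $R_P\le R_C$, but only because the boundary reverse currents are affine in the edge flows. For $n\ge2$ this mechanism is lost. The interior reverse currents satisfy $u_i\gamma_{i+1}(1-\gamma_i)=f_i+u_i(\gamma_{i+1}-\gamma_i)$, and their averages are not determined by $R_P$.

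\paragraph{Why the unweighted entropy cannot work: a weight mismatch.} Expand the averaged balance $\avg{\dot V}=0$ for your entropy to second order in the modulation. It constrains the correlations $\avg{\bar\gamma_j(c_{j-1}\bar u_{j-1}-c_j\bar u_j)}$ with weights $1/(e_j(1-e_j))$, the Hessian of $V$ at $e$. The GOE itself satisfies, exactly,
\[
\GOE(u)=-\sum_{j=1}^{n}\frac{p_j}{e_j}\avg{\bar\gamma_j\,(c_{j-1}\bar u_{j-1}-c_j\bar u_j)}-\sum_{i=0}^{n}p_i\avg{u_i\bar\gamma_i\bar\gamma_{i+1}}.
\]
Here $p_j=p_{j-1}e_j/(1-e_j)$ is the unique weighting, up to scale, that eliminates the unknown means $\avg{\bar\gamma_j}$ from the flow balances $\avg{q_i}=R_P$. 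The two weightings are proportional iff $e_j+e_{j-1}=1$ for $j=2,\dots,n$. So for generic $n\ge2$ your storage function balances the wrong combination, and the inequality $\log z\le z-1$ cannot by itself decide the sign of $R_P-R_C$.

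\paragraph{The missing idea.} You need a storage function whose weights match, namely $V=\frac12\sum_ip_ie_i^{-1}\bar\gamma_i^2$. Because it is quadratic, it combines with the telescoping identities $\sum_ip_ic_i^{-1}h_i\equiv0$ and $\sum_ip_i\avg{u_i}h_i\equiv0$ to give the exact identity $\GOE(u)=-\sum_ip_i\avg{u_id_i}$. You would then still need the pointwise bound $d_i=c_iF(r,s)\ge0$, which amounts to $xy(3-x-y)\le1$ on the nonnegative orthant.

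\paragraph{Your fallbacks.} Putting weights $\alpha_j=p_j(1-e_j)$ into the entropy would fix the second-order mismatch. But then $f_i\exp(\alpha_{i+1}\ell_{i+1}-\alpha_i\ell_i)$ no longer collapses to a reverse current, so the mechanism of Step~3 is gone. A common time change cannot help either, because it leaves the edge-to-edge ratios of $u_i/\avg{u_i}$ unchanged. Your Step~4 and the converse direction are fine and match the paper, once an identity with nonnegative terms is available.
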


Equivalently~$R_P\leq R_C$, with equality only in the degenerate  case   
where   all the rates  share one common $T$-periodic  modulation. Thus, any genuinely nonuniform temporal modulation produces a strict loss in average production.

The proof of this result, given in the next section, is not trivial because
    the periodic orbit generated by the periodic time-varying rates is generally not known explicitly. Thus the proof is not based on some form of pointwise comparison, but rather on a  global dissipation identity coupling the entire chain.

\section{Proof of the Main Result}\label{sec:proof}

The proof includes  four steps. First, periodicity implies that the time-averaged flows through all edges are equal. Second, we compare the periodic orbit with the equilibrium of the averaged system and introduce weights that produce telescoping cancellations along the chain. Third, these cancellations yield an exact dissipation identity expressing the GOE as a weighted sum of local dissipation terms. Finally, each local dissipation term is shown to be nonnegative yielding~$\GOE(u)\leq 0$.
 
\subsection{Average Flow Balance}
Define  
\begin{equation*}
e_0:=1,\qquad \gamma_0(t):=1,
\qquad
e_{n+1}:=0,\qquad
\gamma_{n+1}(t):=0.
\end{equation*}

For each  $i=0,\ldots,n$, define the  flow from site~$i$ to site~$i+1$ along the unique~$T$-periodic solution~$\gamma$ by 
 \begin{equation*}
q_i(t):=u_i(t)\gamma_i(t)\bigl(1-\gamma_{i+1}(t)\bigr). 
\end{equation*}
In particular, $q_0(t)=u_0(t)(1-\gamma_1(t))$,
and $q_n(t)=u_n(t)\gamma_n(t)$.

\begin{lemma}[Equal average edge flows]
\label{lem:equal-average-flows}
The average flows $\avg{q_i}$ across all edges are equal, and satisfy 
\begin{equation*}
\avg{q_0}=\avg{q_1}=\cdots=\avg{q_n}=R_P.
\end{equation*}
\end{lemma}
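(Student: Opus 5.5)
The plan is to integrate the PRFM dynamics along the periodic orbit over one period and use periodicity to kill the derivative terms. Along the unique $T$-periodic solution~$\gamma$, equation~\eqref{eq:rfm} reads
\[
\dot\gamma_i(t)=q_{i-1}(t)-q_i(t),\qquad i=1,\dots,n,
\]
with the boundary conventions $\gamma_0\equiv 1$ and $\gamma_{n+1}\equiv 0$. These conventions make $q_0=u_0(1-\gamma_1)$ and $q_n=u_n\gamma_n$, so the identity holds for every $i$, including the end sites.

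First I will check that $\gamma_i$ is $C^1$ and $T$-periodic. This follows because $\gamma$ solves an ODE with continuous right-hand side, and the periodicity $\gamma(t+T)=\gamma(t)$ is stated just before the lemma. All $q_i$ are continuous and $T$-periodic, since the $u_i$ are. Averaging the identity over $[0,T]$ then gives
\[
\frac1T\bigl(\gamma_i(T)-\gamma_i(0)\bigr)=\avg{q_{i-1}}-\avg{q_i}.
\]
The left-hand side is zero by periodicity. Hence $\avg{q_{i-1}}=\avg{q_i}$ for $i=1,\dots,n$, and chaining these equalities gives $\avg{q_0}=\cdots=\avg{q_n}$.

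Finally, $q_n=u_n\gamma_n$, so $\avg{q_n}=\avg{u_n\gamma_n}=R_P$ by the definition of $R_P$. The argument has no real obstacle. The only care needed is to confirm that the boundary conventions $\gamma_0\equiv1$ and $\gamma_{n+1}\equiv0$ make the telescoping identity valid at $i=1$ and $i=n$, and that $\gamma$ is differentiable, so that the fundamental theorem of calculus applies.
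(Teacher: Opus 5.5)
Your proposal is correct and follows the paper's proof exactly: average the conservation law $\dot\gamma_i=q_{i-1}-q_i$ over one period, use $\gamma_i(T)=\gamma_i(0)$ to get $\avg{q_{i-1}}=\avg{q_i}$, chain these equalities, and identify $\avg{q_n}=\avg{u_n\gamma_n}=R_P$. Your extra checks on the boundary conventions at $i=1$ and $i=n$ and on the differentiability of $\gamma$ are appropriate, and they fill in details the paper leaves implicit.
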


\begin{IEEEproof}
Along the periodic solution,
\begin{equation}
\dot{\gamma}_i(t)=q_{i-1}(t)-q_i(t),
\qquad i=1,\ldots,n.
\label{eq:periodic-conservation}
\end{equation}
Averaging over one period and using the fact that~$\gamma_i(T)=\gamma_i(0)$ yields
$\avg{q_{i-1}}=\avg{q_i}$. Combining this with the fact that~$R_P= \avg{q_n}=\avg{u_n\gamma_n}$
completes the proof. 
\end{IEEEproof}

\subsection{Centered Variables and Edge Expansions}
It is convenient to work with certain ``centered''  variables. We denote these variables by an overbar.
Define the \emph{centered    periodic solution}  by
\begin{align*}
\bar \gamma_i(t):=\gamma_i(t)-e_i,
\quad i=0,\ldots,n+1. 
\end{align*}
Note that
\[
\bar\gamma_0(t)=\bar\gamma_{n+1}(t)=0
.
\]
Define 
the \emph{centered transition rates}
by
\begin{align*}
\bar u_i(t)&:=u_i(t)-\avg{u_i},
\quad i=0,\ldots,n.
\end{align*}
Then
$
\avg{\bar u_i}=0$.
Define also the \emph{centered currents}
along the periodic orbit 
by
\begin{equation}
\bar q_i(t):=q_i(t)-R_C,\quad i=0,\dots,n.
\label{eq:delta-def}
\end{equation}
By Lemma~\ref{lem:equal-average-flows}, 
\begin{equation}
\avg{\bar q_i}=R_P-R_C ,
\qquad i=0,\ldots,n , 
\label{eq:delta-average}
\end{equation}
which is~$\text{GOE}(u)$. To analyze the sign of  this expression, first note that
\begin{align*}
q_i&=u_i \gamma_i(1-\gamma_{i+1})\\
&=u_i(e_i+\bar \gamma_i)(1-e_{i+1}-\bar\gamma_{i+1}) \nonumber\\
&=u_i(c_i+h_i-\bar \gamma_i \bar\gamma_{i+1}).
\end{align*}
where we define
\begin{equation}
    \label{eq:defofhi}
h_i(t) :=(1-e_{i+1})\bar\gamma_i(t)-e_i \bar\gamma_{i+1}(t).
\end{equation}
Note that~$h_0(t)=-\bar\gamma_1(t)$ and~$h_n(t) =\bar\gamma_n (t)$.
Using $u_i=\avg{u_i}+\bar u_i$,  and~\eqref{eq:Rc-aici} gives
\begin{equation}
\bar q_i=c_i\bar u_i+u_i h_i-u_i \bar\gamma_i \bar\gamma_{i+1}.
\label{eq:delta-identity}
\end{equation}
 Combining this with~\eqref{eq:delta-average} implies that
 \begin{equation*}
 \GOE(u)=\avg{\bar q_i}=\avg{u_i h_i}-\avg{u_i\bar\gamma_i\bar\gamma_{i+1}},
 \end{equation*}
 where we used the fact that~$\avg{\bar u_i}=0$. However, determining the sign of this expression seems difficult. It turns out that it is better to consider a sum in the form
 \begin{align}
     \label{eq:sumpi}
 \sum_{i=0}^n p_i\GOE(u)&=\sum_{i=0}^n p_i \avg{\bar q_i}\nonumber\\
 &= \sum_{i=0}^n p_i  \avg{u_ih_i}-\sum_{i=0}^n p_i\avg{ u_i\bar\gamma_i\bar\gamma_{i+1} } ,
 \end{align}
 where the~$p_i$s are certain positive weights that  sum to one, and  then simplify the sums in~\eqref{eq:sumpi}.

\subsection{Positive Weights and  Three  Cancellation Identities}
For an initial value $p_0>0$, to be specified below, define  iteratively 
\begin{equation}
p_i:=p_{i-1}\frac{e_i}{1-e_i},
\qquad i=1,\ldots,n.
\label{eq:p-recursion}
\end{equation}
Then all the~$p_i$s  are  positive. Now choose~$p_0$ such  that
\begin{equation}
\sum_{i=0}^{n}p_i=1.
\label{eq:p-normalization}
\end{equation}

Using the~$p_i$s
transforms sums of properly weighted~$h_i$s into a telescopic sum. To show this, 
first note that  
\begin{align*}
c_i^{-1}h_i&=\frac{(1-e_{i+1})\bar\gamma_i-e_i \bar\gamma_{i+1}}{e_i(1-e_{i+1})}\\& = e_i^{-1}\bar\gamma_i-(1-e_{i+1})^{-1} \bar\gamma_{i+1}
,
\end{align*}
so
  for~$i=0,\dots,n-1$, we have 
\begin{align}\label{eq:symm}
p_i c_i^{-1} h_i&= p_i e_i^{-1} \bar\gamma_i-p_i (1-e_{i+1})^{-1} \bar\gamma_{i+1}
\nonumber\\
&=p_i e_i^{-1} \bar\gamma_i-p_{i+1} e_{i+1}^{-1} \bar\gamma_{i+1}.
\end{align}
 
\begin{lemma}[First Weighted Cancellation]
\label{lem:weighted-cancellation_of_ci}
For the  weights defined by~\eqref{eq:p-recursion} and~\eqref{eq:p-normalization}, we have 
\begin{equation}\label{eq:anosim}
\sum_{i=0}^{n}p_i c_i^{-1} h_i(t) 
=0, \quad\text{for every }t\geq 0. 
\end{equation}
\end{lemma}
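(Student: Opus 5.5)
The plan is to sum the identity~\eqref{eq:symm} over $i=0,\dots,n-1$ and then handle the last term $i=n$ separately. Write $a_i(t):=p_i e_i^{-1}\bar\gamma_i(t)$ for $i=0,\dots,n$. Then~\eqref{eq:symm} says that
\[
p_i c_i^{-1}h_i=a_i-a_{i+1}, \qquad i=0,\dots,n-1.
\]
Summing over these indices telescopes, giving
\[
\sum_{i=0}^{n-1}p_i c_i^{-1}h_i=a_0-a_n .
\]
Since $\bar\gamma_0(t)=0$ for all $t$, we have $a_0=0$, so this partial sum reduces to $-p_n e_n^{-1}\bar\gamma_n(t)$.

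The remaining step is to check that the $i=n$ term cancels this exactly. We have $c_n=e_n(1-e_{n+1})=e_n$, because $e_{n+1}=0$. We also have $h_n=\bar\gamma_n$; this follows from~\eqref{eq:defofhi} with $e_{n+1}=0$ and $\bar\gamma_{n+1}=0$, and was already noted in the text. Hence
\[
p_n c_n^{-1}h_n=p_n e_n^{-1}\bar\gamma_n(t),
\]
which cancels the telescoped sum, and~\eqref{eq:anosim} holds for every $t$.

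The only point requiring care is that the derivation of~\eqref{eq:symm} at $i=0$ and at $i=n-1$ uses the boundary conventions consistently. At $i=0$ we have $e_0=1$, so $c_0=1-e_1$, and the computation of $c_0^{-1}h_0$ remains valid. The step from the first line to the second line of~\eqref{eq:symm} is just the recursion~\eqref{eq:p-recursion} rewritten as
\[
p_i(1-e_{i+1})^{-1}=p_{i+1}e_{i+1}^{-1},
\]
which is valid for $i+1\le n$. The normalization~\eqref{eq:p-normalization} plays no role here; only positivity and the recursion are used.
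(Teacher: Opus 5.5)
Your proof is correct and follows the paper's route. The paper's one-line argument telescopes \eqref{eq:symm}, uses $\bar\gamma_0\equiv 0$, and cancels the leftover $-p_n e_n^{-1}\bar\gamma_n$ against the $i=n$ term via $c_n=e_n$ and $h_n=\bar\gamma_n$; you also correctly observe that the normalization \eqref{eq:p-normalization} is not needed.
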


\begin{IEEEproof}
 This follows from~\eqref{eq:symm} and the boundary conditions~$\bar\gamma_0(t)=0$ and~$h_n(t)=\bar\gamma_n(t)$.    
\end{IEEEproof}

To  derive another  useful  telescopic  sum, note that  for any~$i=1,\dots,n$, we have 
\begin{align}\label{eq:poty}
p_i \avg{u_i}h_i&= p_i\avg{u_i}     (1-e_{i+1}) {e_i}{e_i}^{-1}\bar\gamma_i- p_i\avg{u_i}  e_i \bar\gamma_{i+1} \nonumber\\
&=p_i \avg{u_{i-1}} 
(1-e_{i})e_{i-1}e_i^{-1}\bar\gamma_i-p_i \avg{u_i}  e_i \bar\gamma_{i+1}\nonumber\\
&=p_{i-1}\avg{u_{i-1}} e_{i-1}
 \bar\gamma_i - p_i\avg{u_i}  e_i \bar\gamma_{i+1} , 
\end{align}
where the second equality follows from~\eqref{eq:stedy}, 
and the third from~\eqref{eq:p-recursion}.

\begin{lemma}[Second Weighted Cancellation]
\label{lem:weighted-cancellation_of_ui}
For the  weights defined by~\eqref{eq:p-recursion} and~\eqref{eq:p-normalization}, we have 
\begin{equation*}
\sum_{i=0}^{n}p_i \avg{u_i} h_i(t)=0,
\quad\text{for every }t\geq 0.
\end{equation*}
\end{lemma}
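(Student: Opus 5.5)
The plan is to reduce the statement to Lemma~\ref{lem:weighted-cancellation_of_ci}, using the relation between $\avg{u_i}$ and $c_i$ given by~\eqref{eq:Rc-aici}. Since $R_C=\avg{u_i}c_i$ for every $i=0,\dots,n$, we have $\avg{u_i}=R_C c_i^{-1}$. Substituting this gives
\[
\sum_{i=0}^{n}p_i\avg{u_i}h_i(t)=R_C\sum_{i=0}^{n}p_i c_i^{-1}h_i(t),
\]
and the right-hand side vanishes for every $t\geq 0$ by Lemma~\ref{lem:weighted-cancellation_of_ci}. This already proves the claim. The only point to check is that~\eqref{eq:Rc-aici} holds at the boundary edges $i=0$ and $i=n$. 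With the conventions $e_0=1$ and $e_{n+1}=0$ we have $c_0=1-e_1$ and $c_n=e_n$. The first and last equalities in~\eqref{eq:stedy} then give $\avg{u_0}c_0=R_C=\avg{u_n}c_n$, so the substitution is valid for all $i$.

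As an independent check, I would also verify the identity directly through the telescoping form~\eqref{eq:poty}. Define $a_i:=p_i\avg{u_i}e_i\bar\gamma_{i+1}$ for $i=0,\dots,n$. Then~\eqref{eq:poty} reads $p_i\avg{u_i}h_i=a_{i-1}-a_i$ for $i=1,\dots,n$. For $i=0$, since $h_0=-\bar\gamma_1$ and $e_0=1$, we get $p_0\avg{u_0}h_0=-p_0\avg{u_0}\bar\gamma_1=-a_0$. Summing over $i$ telescopes to $-a_n=-p_n\avg{u_n}e_n\bar\gamma_{n+1}$, which is zero because $\bar\gamma_{n+1}(t)=0$.

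There is no real obstacle here. The only care needed is the bookkeeping of the boundary terms: the $i=0$ term is not covered by~\eqref{eq:poty}, and $h_0$, $h_n$ must be read off correctly from the conventions $\bar\gamma_0=\bar\gamma_{n+1}=0$. I would present the one-line reduction via~\eqref{eq:Rc-aici} as the proof and mention the telescoping as the mechanism behind it.
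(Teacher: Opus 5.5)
Your proof is correct, and it takes a slightly different route from the paper's. The paper proves the lemma by a separate telescoping computation. It rewrites each term via \eqref{eq:poty} for $i=1,\dots,n$, whose second step uses \eqref{eq:stedy}, and the sum collapses to $-p_n\avg{u_n}e_n\bar\gamma_{n+1}=0$. The $i=0$ term cancels the first telescoping term, exactly as in your ``independent check.''

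You instead note that $\avg{u_i}=R_C c_i^{-1}$ on every edge. The sum is therefore $R_C$ times the sum in Lemma~\ref{lem:weighted-cancellation_of_ci}.

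The two arguments are the same computation in different notation. For $i<n$ we have $p_i\avg{u_i}e_i=R_C p_i/(1-e_{i+1})=R_C p_{i+1}e_{i+1}^{-1}$, so the telescoping terms in \eqref{eq:poty} are exactly $R_C$ times those in \eqref{eq:symm}.

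Your route shows more clearly that the second cancellation carries no information beyond the first. The same fact reappears in the proof of Lemma~\ref{lem:third_weighted-cancellation}, where \eqref{eq:anosim} is used to remove the $R_C$ term, so \eqref{eq:poty} could be dropped. The paper's route gives a self-contained computation that parallels, rather than invokes, Lemma~\ref{lem:weighted-cancellation_of_ci}.

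Your boundary bookkeeping is also correct: $c_0=1-e_1$, $c_n=e_n$, and $i=0$ needs separate treatment in the telescoping.
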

\begin{IEEEproof}
This follows from~\eqref{eq:poty} 
and the fact that~$\bar\gamma_{n+1}(t)=0$.
\end{IEEEproof}

\begin{lemma}[Third Weighted Cancellation]
\label{lem:third_weighted-cancellation}
For the  weights defined by~\eqref{eq:p-recursion} and~\eqref{eq:p-normalization}, we have 
\begin{equation}\label{eq:delta-h-balance}
\sum_{i=0}^{n}p_i c_i^{-1}
\avg{ \bar q_i h_i} =0.
\end{equation}
\end{lemma}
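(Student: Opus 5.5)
The plan is to reuse the telescoping structure~\eqref{eq:symm} pointwise in time, then apply summation by parts to turn the weighted sum into an exact time derivative, and finally average over one period.

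\emph{Step 1 (telescoping form).} Define
\[
a_i(t):=p_i e_i^{-1}\bar\gamma_i(t),\qquad i=0,\dots,n+1,
\]
with $a_0\equiv 0$ (since $\bar\gamma_0=0$) and $a_{n+1}\equiv 0$ (since $\bar\gamma_{n+1}=0$). By~\eqref{eq:symm}, $p_i c_i^{-1}h_i=a_i-a_{i+1}$ for $i=0,\dots,n-1$. For $i=n$ I will check the identity directly: $h_n=\bar\gamma_n$ and $c_n=e_n(1-e_{n+1})=e_n$, so $p_nc_n^{-1}h_n=p_ne_n^{-1}\bar\gamma_n=a_n-a_{n+1}$. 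Hence, for every $t$,
\[
\sum_{i=0}^n p_ic_i^{-1}\bar q_i h_i=\sum_{i=0}^n \bar q_i\,(a_i-a_{i+1}).
\]

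\emph{Step 2 (summation by parts and the dynamics).} Reindexing and using $a_0=a_{n+1}=0$, the right-hand side becomes
\[
\sum_{i=1}^n a_i(\bar q_i-\bar q_{i-1}).
\]
The constant $R_C$ cancels in each difference, so $\bar q_i-\bar q_{i-1}=q_i-q_{i-1}=-\dot\gamma_i$ by~\eqref{eq:periodic-conservation}. Since $e_i$ is constant, $\dot\gamma_i=\dot{\bar\gamma}_i$. Therefore
\[
\sum_{i=0}^n p_ic_i^{-1}\bar q_i h_i=-\sum_{i=1}^n p_ie_i^{-1}\bar\gamma_i\dot{\bar\gamma}_i=-\frac{d}{dt}\Bigl(\frac12\sum_{i=1}^n p_ie_i^{-1}\bar\gamma_i^2\Bigr).
\]

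\emph{Step 3 (averaging).} The function in parentheses is $T$-periodic because $\gamma$ is. Hence the time average of its derivative vanishes, and by linearity of $\avg{\cdot}$ this gives~\eqref{eq:delta-h-balance}.

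The only delicate points are the boundary terms: the $i=n$ edge, where $c_n=e_n$, and the endpoints $a_0=a_{n+1}=0$. These are exactly where the conventions $\bar\gamma_0=\bar\gamma_{n+1}=0$ enter. I expect no real obstacle beyond recognizing that the weighted sum is the derivative of the quadratic storage function $\tfrac12\sum_i p_ie_i^{-1}\bar\gamma_i^2$.
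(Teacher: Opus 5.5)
Your proof is correct and is essentially the paper's argument: you use the same potentials $p_ie_i^{-1}\bar\gamma_i$ (the paper's $\theta_i$), summation by parts against $\dot\gamma_i=q_{i-1}-q_i$, and periodicity of $V=\frac12\sum_{i=1}^n p_ie_i^{-1}\bar\gamma_i^2$. The only difference is cosmetic. You sum by parts with $\bar q_i$ directly, so $R_C$ cancels in the differences, whereas the paper works with $q_i$ and removes the $R_C$ term via Lemma~\ref{lem:weighted-cancellation_of_ci}. (Set $a_{n+1}:=0$ by convention rather than by your formula, since $e_{n+1}=0$ makes the formula meaningless there.)
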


\begin{IEEEproof}
  Define
\begin{equation*}
\theta_i(t):={p_i}{e_i}^{-1}\bar\gamma_i(t),
\quad i=1,\ldots,n,
\end{equation*}
with boundary conditions
\begin{align}\label{eq:bnd_theta}
\theta_0(t)=\theta_{n+1}(t):=0.
\end{align}
 Then~\eqref{eq:symm} and the boundary conditions give
 \begin{equation}
\theta_i-\theta_{i+1}= {p_i}{c_i}^{-1}h_i,
\qquad i=0,\ldots,n.
\label{eq:theta-edge}
\end{equation}

Define the nonnegative   function
\begin{equation*}
V(t):=\frac{1}{2}\sum_{i=1}^{n} p_i {e_i}^{-1}\bar\gamma_i^2(t).
\end{equation*}
Since~$\bar\gamma_i(t)=\gamma_i(t)-e_i$,
 this is   a weighted ``quadratic energy function''  measuring how far the periodic solution is from the constant-rate equilibrium.
 Now,
 \begin{align*}
     \dot V &= \sum_{i=1}^n {p_i}{e_i}^{-1}\bar\gamma_i  \dot{\bar \gamma}_i\\
     &= \sum_{i=1}^n \theta_i  \dot{\bar \gamma}_i\\
     &=\sum_{i=1}^n \theta_i  \dot{ \gamma}_i,
 \end{align*}
 and using~\eqref{eq:periodic-conservation} gives~$\dot V =\sum_{i=1}^n \theta_i  (q_{i-1}-q_i)$. Combining this with~\eqref{eq:bnd_theta} implies that
 \[
 \dot V =\sum_{i=0}^n q_i  (\theta_{i+1}-\theta_i),
 \]
and~\eqref{eq:theta-edge} gives
 \begin{align*}
\dot V
& = -\sum_{i=0}^{n}{p_i}c_i^{-1}q_i h_i.
\end{align*}
Using~\eqref{eq:delta-def} yields 
$\dot V
  = -\sum_{i=0}^{n}p_ic_i^{-1}(  \bar q_i +R_C ) h_i$,  and~\eqref{eq:anosim}
gives
\[
\dot V
  = -\sum_{i=0}^{n}{p_i}c_i^{-1}   \bar q_i  h_i.
  \]
By the definition of~$V$, we have~$V(T)=V(0)$, that is,~$\avg{\dot V}=0$, and this completes the proof.  
\end{IEEEproof}

Note that combining~\eqref{eq:sumpi} and Lemma~\ref{lem:weighted-cancellation_of_ui} 
gives
 \begin{align}     \label{eq:yter}
\GOE(u)
 &= \sum_{i=0}^n p_i  \avg{\bar u_ih_i}-\sum_{i=0}^n p_i\avg{ u_i\bar\gamma_i\bar\gamma_{i+1} } ,
 \end{align}
The next step is to rewrite  the first  sum   on the right-hand side of this equation.

\begin{lemma}
    We have
\begin{align}\label{eq:wh-identity}
 \sum_{i=0}^n p_i \avg{\bar u_i h_i} & =\sum_{i=0}^n p_i  c_i^{-1} \avg{ u_i  h_i  \bar\gamma_i \bar\gamma_{i+1}} -
\sum_{i=0}^n p_i  c_i^{-1}\avg{ u_i  h_i^2 }
 .
\end{align}
\end{lemma}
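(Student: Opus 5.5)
The identity follows by substituting the edge expansion~\eqref{eq:delta-identity} into the third weighted cancellation, Lemma~\ref{lem:third_weighted-cancellation}. I do not need any new estimate; the argument is linear algebra on the averaged quantities.

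The first step is to multiply~\eqref{eq:delta-identity} by $p_i c_i^{-1} h_i(t)$, which gives, pointwise in $t$,
\begin{equation*}
p_i c_i^{-1}\bar q_i h_i = p_i \bar u_i h_i + p_i c_i^{-1} u_i h_i^2 - p_i c_i^{-1} u_i h_i \bar\gamma_i\bar\gamma_{i+1}, \qquad i=0,\dots,n.
\end{equation*}
Next I take time averages over one period. Averaging is linear and the weights $p_i$ and $c_i$ are constants, so the average of each term is just the corresponding constant times $\avg{\cdot}$ of the time-varying factor. I then sum over $i=0,\dots,n$.

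By Lemma~\ref{lem:third_weighted-cancellation}, the left-hand side sums to zero. This yields
\begin{equation*}
0=\sum_{i=0}^n p_i\avg{\bar u_i h_i}+\sum_{i=0}^n p_i c_i^{-1}\avg{u_i h_i^2}-\sum_{i=0}^n p_i c_i^{-1}\avg{u_i h_i\bar\gamma_i\bar\gamma_{i+1}}.
\end{equation*}
Moving the last two sums to the other side gives exactly~\eqref{eq:wh-identity}.

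There is no real obstacle here. The only points to check are the following. The expansion~\eqref{eq:delta-identity} holds for every edge, including the boundary edges $i=0$ and $i=n$; it does, because $\bar\gamma_0=\bar\gamma_{n+1}=0$ and $c_0=1-e_1$, $c_n=e_n$ are consistent with the conventions $e_0=1$, $e_{n+1}=0$. Also, all $c_i\in(0,1)$, so the factors $c_i^{-1}$ are well defined.
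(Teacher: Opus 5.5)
Your proof is correct and follows the paper's argument exactly: multiply \eqref{eq:delta-identity} by $p_i c_i^{-1} h_i$, average over one period, sum over $i$, and use Lemma~\ref{lem:third_weighted-cancellation} to make the left-hand side vanish. Your added check that the expansion holds, and that $c_i^{-1}$ is well defined, at the boundary edges $i=0$ and $i=n$ is a harmless addition that the paper leaves implicit.
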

\begin{IEEEproof}
Multiplying~\eqref{eq:delta-identity} by $p_i c_i^{-1} h_i $
gives
\begin{equation*}
p_i c_i^{-1} h_i \bar q_i=p_i \bar u_i   h_i + p_i  c_i^{-1} u_i     h_i^2  
-p_i  c_i^{-1}  u_i    h_i   \bar\gamma_i \bar\gamma_{i+1},
\end{equation*}
so
\[
p_i c_i^{-1} \avg{ h_i \bar q_i}=p_i\avg{ \bar u_i   h_i} + p_i  c_i^{-1} \avg{u_i     h_i^2}  
-p_i  c_i^{-1}  \avg{u_i    h_i   \bar\gamma_i \bar\gamma_{i+1}}.
\]
Summing  from~$i=0$ to~$i=n$, and using~\eqref{eq:delta-h-balance}  completes the proof.
\end{IEEEproof}

 We can now give a more useful expression for the~GOE. 
\begin{proposition}[Dissipation representation of the GOE]
\label{prop:dissipation}
For $i=0,\ldots,n$, define
\begin{equation}d_i(t):=c_i^{-1}{h_i^2(t)}
+ 
\left(1-{c_i}^{-1}{h_i(t)} \right)\bar\gamma_i(t)\bar\gamma_{i+1}(t).
\label{eq:D-def}
\end{equation}
Then
\begin{equation}
\GOE(u)=-\sum_{i=0}^{n}p_i\avg{u_i d_i}.
\label{eq:exact-dissipation}
\end{equation}
\end{proposition}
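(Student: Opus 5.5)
The plan is to combine the identity~\eqref{eq:yter} with the lemma giving~\eqref{eq:wh-identity}. No new estimates are needed; the proposition is pure bookkeeping. Starting from~\eqref{eq:yter},
\[
\GOE(u)=\sum_{i=0}^n p_i\avg{\bar u_i h_i}-\sum_{i=0}^n p_i\avg{u_i\bar\gamma_i\bar\gamma_{i+1}},
\]
we substitute the right-hand side of~\eqref{eq:wh-identity} for the first sum. This expresses the GOE entirely through averages of products involving $u_i$ (rather than $\bar u_i$):
\[
\GOE(u)=\sum_{i=0}^n p_i c_i^{-1}\avg{u_i h_i\bar\gamma_i\bar\gamma_{i+1}}-\sum_{i=0}^n p_i c_i^{-1}\avg{u_i h_i^2}-\sum_{i=0}^n p_i\avg{u_i\bar\gamma_i\bar\gamma_{i+1}}.
\]

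Next we group the three sums termwise under a common $p_i\avg{u_i\,\cdot\,}$. Using linearity of the average, the bracket for edge $i$ is
\[
c_i^{-1}h_i\bar\gamma_i\bar\gamma_{i+1}-c_i^{-1}h_i^2-\bar\gamma_i\bar\gamma_{i+1}
=-\Bigl(c_i^{-1}h_i^2+\bigl(1-c_i^{-1}h_i\bigr)\bar\gamma_i\bar\gamma_{i+1}\Bigr)=-d_i,
\]
which is exactly the definition~\eqref{eq:D-def}. Summing with weights $p_i$ then gives~\eqref{eq:exact-dissipation}.

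The only points needing care are sign tracking and the boundary edges $i=0$ and $i=n$. There $\bar\gamma_0=\bar\gamma_{n+1}=0$, so $d_0=c_0^{-1}h_0^2$ and $d_n=c_n^{-1}h_n^2$. These edges are still covered by the same formula, because~\eqref{eq:yter} and~\eqref{eq:wh-identity} were both stated for $i=0,\dots,n$. I do not expect a genuine obstacle. The substantive work, namely the third cancellation (Lemma~\ref{lem:third_weighted-cancellation}) and the use of Lemma~\ref{lem:weighted-cancellation_of_ui} to pass from $u_i$ to $\bar u_i$, has already been done. The real difficulty lies downstream, in showing $d_i\ge 0$, which this proposition does not require.
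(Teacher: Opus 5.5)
Your proposal is correct and matches the paper's proof: the paper obtains the proposition by substituting \eqref{eq:wh-identity} into \eqref{eq:yter} and grouping the three sums edgewise under $p_i\avg{u_i\,\cdot\,}$, which gives exactly $-d_i$ as you computed. Your treatment of the boundary edges, where $d_0=c_0^{-1}h_0^2$ and $d_n=c_n^{-1}h_n^2$ because $\bar\gamma_0=\bar\gamma_{n+1}=0$, also agrees with the paper's later computation.
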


 \begin{IEEEproof}
This follows from combining~\eqref{eq:yter}
and~\eqref{eq:wh-identity}.
 \end{IEEEproof}

Thus, GOE is exactly the negative of the total weighted ``dissipation''
terms along the edges of the chain. Note that since every~$h_i$ 
and every~$\bar\gamma_i$  is~$T$-periodic, so is every~$d_i$.

\subsection{Nonnegativity of the Edge Dissipation}

\begin{lemma}
\label{lem:D-nonnegative}
For every $i=0,\ldots,n$, we have 
\begin{equation}
d_i(t)\geq0   \text{ for all } t\geq 0.
\label{eq:D-nonnegative}
\end{equation}
\end{lemma}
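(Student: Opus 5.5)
The plan is to show that $d_i(t)\ge 0$ holds pointwise, as an algebraic inequality in the two numbers $\bar\gamma_i(t)$ and $\bar\gamma_{i+1}(t)$. The only information I will use about them is the a priori constraint $\gamma_j(t)\in(0,1)$ from the entrainment result, with $\gamma_0\equiv1$ and $\gamma_{n+1}\equiv0$. Fix $i$ and $t$, and abbreviate $\alpha:=e_i$, $\beta:=1-e_{i+1}$, so that $c_i=\alpha\beta$. Introduce the normalized variables
\[
X:=\frac{\gamma_i}{e_i}>0,\qquad Y:=\frac{1-\gamma_{i+1}}{1-e_{i+1}}>0 .
\]
Then $\bar\gamma_i=\alpha(X-1)$ and $\bar\gamma_{i+1}=\beta(1-Y)$. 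Substituting into~\eqref{eq:defofhi} gives
\[
h_i=\alpha\beta(X-1)-\alpha\beta(1-Y)=c_i(X+Y-2),\qquad \bar\gamma_i\bar\gamma_{i+1}=-c_i(X-1)(Y-1).
\]

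Next I set $u:=X-1>-1$ and $v:=Y-1>-1$ and substitute into~\eqref{eq:D-def}. This should give
\[
\frac{d_i}{c_i}=(u+v)^2-uv\,(1-u-v)=u^2+uv+v^2+u^2v+uv^2 .
\]
Regrouping, this equals
\[
(1+v)u^2+v(1+v)u+v^2 ,
\]
which is a quadratic in $u$ with positive leading coefficient $1+v$. Its discriminant is
\[
v^2(1+v)^2-4(1+v)v^2=v^2(1+v)(v-3).
\]
By symmetry, viewed as a quadratic in $v$ the expression is $(1+u)v^2+u(1+u)v+u^2$, with discriminant $u^2(1+u)(u-3)$.

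The conclusion then follows from a short case split. If $v<3$, the discriminant in $u$ is nonpositive (because $1+v>0$), so the quadratic in $u$ is nonnegative for every $u$. Likewise, if $u<3$ the quadratic in $v$ is nonnegative. In the remaining case $u,v\ge3$, every monomial in $u^2+uv+v^2+u^2v+uv^2$ is nonnegative. Since $c_i>0$, this gives $d_i\ge0$.

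The boundary edges are covered by the same computation. For $i=0$ we have $e_0=\gamma_0=1$, so $u=0$ and $d_0/c_0=v^2$. For $i=n$ we have $e_{n+1}=\gamma_{n+1}=0$, so $v=0$ and $d_n/c_n=u^2$.

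The main obstacle is finding the normalization by $e_i$ and $1-e_{i+1}$ that makes $h_i/c_i$ and $\bar\gamma_i\bar\gamma_{i+1}/c_i$ free of the parameters. After that, the only delicate point is that the quadratic form is not globally nonnegative. Positivity genuinely uses the constraints $u,v>-1$, that is, $\gamma_i>0$ and $\gamma_{i+1}<1$, through the factor $1+v$ (resp. $1+u$) in the discriminant. For the later equality analysis, this argument also records that $d_i=0$ forces $u=v=0$: in every case the bound is strict unless $u=v=0$, since for instance the expression equals $u^2$ when $v=0$ and $v^2$ when $u=0$.
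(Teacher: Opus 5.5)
Your proof is correct. The reduction step is the same as the paper's: your $X,Y$ are exactly the paper's $x=r+1$ and $y=1-s$, with $u=r$ and $v=-s$. So your polynomial $u^2+uv+v^2+u^2v+uv^2$ is the paper's $F(r,s)$, and the boundary edges are handled identically.

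Where you differ is in proving that this polynomial is nonnegative on $\{u,v>-1\}$. The paper rewrites it as $1-xy(3-x-y)$ on the orthant $x,y\ge0$ and disposes of $x+y\ge3$ trivially. Otherwise it uses $xy\le (x+y)^2/4$ together with the one-variable bound $z^2(3-z)/4\le 1$ on $[0,3)$, attained only at $z=2$. This is compact, and the equality case of AM--GM exhibits the unique zero $x=y=1$.

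You instead treat the polynomial as a quadratic in $u$ with leading coefficient $1+v$ and discriminant $v^2(1+v)(v-3)$. This shows exactly where the constraint $\gamma_{i+1}<1$ enters (and, by symmetry, $\gamma_i>0$). It also shows that for $v\in(-1,3)$ no constraint on $u$ is needed at all. The price is a three-way case split, using symmetry and a crude monomial bound for $u,v\ge 3$.

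Your strictness remark is also correct: $d_i=0$ forces $\bar\gamma_i=\bar\gamma_{i+1}=0$. It would let one skip the edge-by-edge induction in the paper's zero-GOE argument, since each internal $d_i\equiv 0$ already forces both adjacent centered densities to vanish.
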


\begin{IEEEproof}
By definition, 
\begin{align}
\label{eq:dois}
d_0(t) = \frac{h_0^2(t)}{c_0}
+
\left(1-\frac{h_0(t)}{c_0}\right) \bar\gamma_0(t)\bar\gamma_{1}(t)= \frac {\bar \gamma_1^2(t)}{1-e_1} \geq 0
\end{align}
and
\[
d_n(t) =\frac{h_n^2(t)}{c_n}
+
\left(1-\frac{h_n(t)}{c_n}\right) \bar\gamma_n(t)\bar\gamma_{n+1}(t)=\frac{\bar\gamma_n^2(t)}{e_n}\geq 0.   
\]
Fix an internal edge $i\in\{1,\ldots,n-1\}$. 
Define 
\begin{equation*}
\begin{aligned}
\alpha&:=e_i,
&\qquad
\beta&:=1-e_{i+1},\\
r(t)&:=\frac{\bar\gamma_i(t)} {\alpha},
&
s(t)&:=\frac{\bar\gamma_{i+1}(t)}{\beta}.
\end{aligned}
\end{equation*}
Since $\gamma_i(t)=\alpha(1+r(t))>0$ and
$1-\gamma_{i+1}(t)=\beta(1-s(t))>0$, we have
\begin{equation*}
r(t)>-1,
\quad
s(t)<1 \text{ for all } t\geq 0.
\end{equation*}
Moreover,
\begin{equation*}
c_i=\alpha\beta,
\qquad
h_i=c_i(r-s),
\qquad
\bar\gamma_i \bar\gamma_{i+1}=c_i rs.
 \end{equation*}
so~\eqref{eq:D-def} gives 
\begin{align*}
d_i
&=c_i(r-s)^2+c_i rs(1-r+s) \nonumber\\
&=c_i F(r,s),
\end{align*}
where
\begin{equation*}
F(r,s):=r^2(1-s)+s^2(1+r)-rs.
\end{equation*}
It remains to prove that \begin{align}
    \label{eq:f_lower}
    F(r,s)\geq0 \text{ on the domain } \{(r,s):r>-1,\, s<1\}.
\end{align}
    To prove this, consider the problem of minimizing~$F(r,s)$ over the  domain~$\{(r,s):r\geq -1,\, s\leq 1\}$. Note that~$F(0,0)=0$. 
Define:
\[
x:=r+1,\, y:=1 - s.
\]
Then the problem is to minimize
\[
F(x,y)= 1-xy(3-x-y)
\]
over the domain~$\{(x,y):x\geq 0,\, y\geq 0\}$. 
Thus this transformation converts the constrained domain into the nonnegative orthant, 
 and reduces the problem to maximizing a symmetric cubic expression.

It is clear that the minimization 
problem is well-defined. We consider two cases.

\noindent Case 1: Suppose that~$x+y\geq 3$. Then~$F(x,y)\geq 1$. 

\noindent Case 2: Suppose that~$x+y<3$. Let~$z:=x+y$. Then~$z \in [0,3)$. Since~$xy\leq z^2/4$,  we have 
\[
xy(3-x-y)\leq \frac{z^2(3-z)}{4}.
\]
On the domain~$z \in [0,3)$, we have that~$g(z):=\frac{z^2(3-z)}{4}$ attains a unique maximum at~$z=2$ with~$g(2)=1$, so
\[
xy(3-x-y)\leq 1,
\]
and thus 
\[
F(x,y)  \geq 0. 
\]
We conclude that~\eqref{eq:f_lower} indeed holds, and this proves~\eqref{eq:D-nonnegative}. 
\end{IEEEproof}

    We can now prove the first assertion in Theorem~\ref{thm:main}. 
By Proposition~\ref{prop:dissipation}, 
$\GOE(u)=-\sum_{i=0}^{n}p_i\avg{u_i d_i}$.
Since~$p_i >0$,  $u_i(t)>0$ for all~$t\geq 0$,  and~$d_i(t)\geq0$ for all~$t\geq 0$,    every term in
the sum is nonnegative, and hence
\[
\GOE(u)\leq0.
\]

\subsection{Characterization of the Zero GOE Case}\label{sec:zerogoe}
The next result shows that~$\GOE(u)=0$ only in a specific and degenerate  case where all the $T$-periodic transition rates  are equal up to scaling by a positive constant. 
\begin{proposition}
[Characterization of zero GOE]
\label{thm:zero-goe}
Under the assumptions of Theorem~\ref{thm:main}, the following statements are
equivalent:
\begin{enumerate}
\item $\GOE(u)=0$.
\item There exists a continuous and
positive $T$-periodic function
$m:\R_{\geq0}\to\R_{>0}$ such that
\begin{equation}
u_i(t)=m(t)\avg{u_i},
\, i=0,\ldots,n,
\text{ with} \avg{m}=1.
\label{eq:common-modulation}
\end{equation}
\end{enumerate}
\end{proposition}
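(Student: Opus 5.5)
We prove the two implications separately, relying on the dissipation identity of Proposition~\ref{prop:dissipation} together with Lemma~\ref{lem:D-nonnegative}.

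\emph{(2)$\Rightarrow$(1).} Assume~\eqref{eq:common-modulation} holds. Reparametrize time by $\tau(t):=\int_0^t m(s)\,ds$. Then $\tau$ is strictly increasing with $\tau(t+T)=\tau(t)+T$, because $\avg{m}=1$. Set $y(\tau(t)):=x(t)$. This $y$ solves the constant-rates RFM with rates $\avg{u_i}$. The constant function $y\equiv e$ is an equilibrium of that system, so $x(t)\equiv e$ is a $T$-periodic solution of the PRFM. By uniqueness of the periodic orbit~\cite{RFM_entrain}, $\gamma\equiv e$. Hence $R_P=\avg{u_n e_n}=\avg{u_n}e_n=R_C$, that is, $\GOE(u)=0$. (Alternatively, $\bar\gamma\equiv 0$ gives $d_i\equiv0$ in~\eqref{eq:exact-dissipation}.)

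\emph{(1)$\Rightarrow$(2).} Assume $\GOE(u)=0$. By~\eqref{eq:exact-dissipation}, $\sum_i p_i\avg{u_i d_i}=0$. Every term is the average of a continuous nonnegative function with $p_i>0$ and $u_i>0$, so $d_i(t)=0$ for all $t$ and all $i$. In particular, $d_0=\bar\gamma_1^2/(1-e_1)\equiv0$, so $\bar\gamma_1\equiv0$. We then propagate along the chain. For an internal edge, $d_i=c_iF(r,s)$. The key point is that if $r\equiv0$ then $F(0,s)=s^2$, so $F=0$ forces $s=0$. Inductively, $\bar\gamma_i\equiv0$ gives $\bar\gamma_{i+1}\equiv0$, and therefore $\gamma\equiv e$. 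This induction avoids having to locate all zeros of $F$ on the open domain. We only need $F(0,s)=s^2$.

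Next we read off the rates. With $\gamma\equiv e$, we have $\dot\gamma_i=0$, so~\eqref{eq:periodic-conservation} gives $q_0(t)=q_1(t)=\cdots=q_n(t)=:Q(t)$ for all $t$. Moreover, $q_i(t)=u_i(t)c_i$. Define $m(t):=Q(t)/R_C$, which is continuous, positive and $T$-periodic. Then
\[
u_i(t)=Q(t)/c_i=m(t)R_C/c_i=m(t)\avg{u_i}
\]
by~\eqref{eq:Rc-aici}. Averaging this identity gives $\avg{m}=1$, which is~\eqref{eq:common-modulation}.

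We do not expect a serious obstacle. The only delicate points are the passage from zero average to pointwise vanishing, which uses continuity and positivity, and the choice to use the boundary edge $d_0$ to start the induction rather than analyzing the zero set of $F$ directly.
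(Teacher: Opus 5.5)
Your proof is correct and follows essentially the same route as the paper. You force $d_i\equiv 0$ from the dissipation identity and propagate $\bar\gamma_i\equiv 0$ along the chain starting from $d_0$; your $F(0,s)=s^2$ is exactly the paper's $d_i=e_i^2c_i^{-1}\bar\gamma_{i+1}^2$. You then read off the common modulation from $u_i(t)c_i$ being edge-independent, and for the converse you combine the time reparametrization with uniqueness of the periodic orbit, as the paper does (your explicit check that $e$ is a constant solution of the PRFM just makes that step slightly more precise).
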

\begin{IEEEproof}
Assume first that $\GOE=0$. By~\eqref{eq:exact-dissipation},
$
\sum_{i=0}^{n}p_i\avg{u_id_i}=0.$ The term~$u_i(t)d_i(t)$ is continuous and nonnegative for all~$t\geq 0$, and~$p_i>0$ for all~$i$, so we conclude that 
\begin{equation}
d_i(t)\equiv0,
\qquad i=0,\ldots,n.
\label{eq:all-D-zero}
\end{equation}
In particular,~\eqref{eq:dois} gives $\bar\gamma _1(t) = 0$ for all~$t\geq 0$. 
We now show that this implies that every~$\bar \gamma_k$ is identically zero. 
Suppose
that for some 
index~$i\in\{1,\ldots,n-1\}$ we 
have
\[
\bar\gamma_i(t)=0 \text{  for all }t\geq 0.
\]
Then~\eqref 
{eq:defofhi}
gives
\begin{equation*}
h_i(t) =-e_i \bar\gamma_{i+1}(t),
\end{equation*}
so
\begin{equation*}
d_i(t)=\frac{e_i^2}{c_i}\bar\gamma_{i+1}^2(t),
\end{equation*}
and~\eqref{eq:all-D-zero} implies that~$\bar\gamma _{i+1}(t)\equiv0$. 
We conclude that 
\begin{equation*}
\bar\gamma_i(t) =0,
\text{ for all } i=1,\ldots,n, \text{ and all }t\geq 0,
\end{equation*}
so the unique periodic solution is constant:~$\gamma(t)\equiv e$.
Substituting this  in~\eqref{eq:rfm} gives
\begin{equation*}
u_{i-1}(t)c_{i-1}=u_i(t)c_i,
\qquad i=1,\ldots,n.
\end{equation*}
Thus, all the transition rates  are equal up to scaling by a positive constant.  Defining~$m(t):=u_0(t)/ \avg{u_0}$ yields~\eqref{eq:common-modulation}.

To prove the converse implication, assume 
that~\eqref{eq:common-modulation}  holds. Then 
the~RFM equations become 
\begin{align*}
\dot{x}_i(t)
=m(t)\bigl(
\avg{u_{i-1}}x_{i-1}(t)(1-x_i(t) ) -\avg{u_i} x_i(t)(1-x_{i+1}(t))\bigr).
\end{align*}
Since~$m$ is a continuous and positive function, this is just a time-parametrization of the RFM with constant transition rates~$\lambda_i=\avg{u_i}$. Thus, the common modulation changes only the speed at which trajectories are
traversed; it does not change the state-space trajectories nor the equilibrium.
In particular, the unique~$T$-periodic  solution is just~$e$, and thus~$R_P=R_C$.
\end{IEEEproof}

This completes the proof of Theorem~\ref{thm:main}. 

\section{Numerical Example}
\label{sec:numerical}

Consider an RFM with $n=3$ sites
and  rates~$\lambda_i(t)=u_i(t)$ where 
\begin{equation*}
u_0(t)=1,
\qquad
u_1(t)=2+\cos(t),
\qquad
u_2(t)=2,
\qquad
u_3(t)=3
 \end{equation*}
(see Figure~\ref{fig:peri_controls}).
Note that these rates are jointly $T$-periodic with~$T=2\pi$,
and that they do not satisfy~\eqref{eq:common-modulation}, so the theory developed above implies~$\GOE(u)<0$. 

\begin{figure}[!t]
\centering
\includegraphics[scale=0.6]{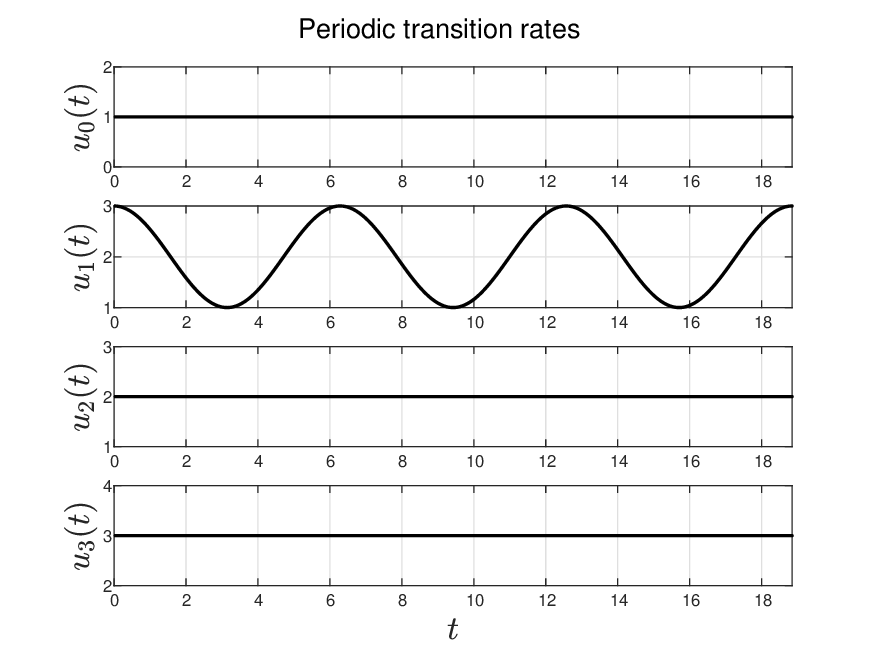}
\caption{ Periodic transition rates~$u_0(t),\dots,u_3(t) $
as a function of~$t\in[0,3T)$. }
\label{fig:peri_controls}
\end{figure}

The corresponding $T$-periodic solution~$\gamma$ was computed
numerically,\footnote{All the numerical calculations  were performed using Matlab.}
and numerically  integrating along this solution gives
\[
R_P=\frac1{2\pi}\int_0^{2\pi} u_3(t) \gamma_3(t)dt =0.5521 
\]
(all numerical values 
are to four-digit accuracy). 

The  equilibrium~$e$ corresponding  to the averaged rates
\begin{equation*}
\lambda_0=\avg{u_0}=1,
\qquad
\lambda_1=\avg{u_1}=2,
\qquad
\lambda_2=\avg{u_2}=2,
\qquad
\lambda_3=\avg{u_3}=3
\end{equation*}
was computed  using~\eqref{eq:stedy} yielding
\[
e_1=    0.4343
,\qquad
e_2=0.3486 ,\qquad
e_3=    0.1886 ,
\]
  so
\[
R_C=\avg{u_3}e_3= 0.5657
\]
(Figure~\ref{fig:solplus} depicts~$\gamma_i(t)$ and~$e_i$ over three periods).
Thus,
\begin{equation}
\label{eq:pivr}
\GOE(u)=R_P-R_C = -0.0136.
\end{equation}

\begin{figure}[!t]
\centering
\includegraphics[scale=0.6]{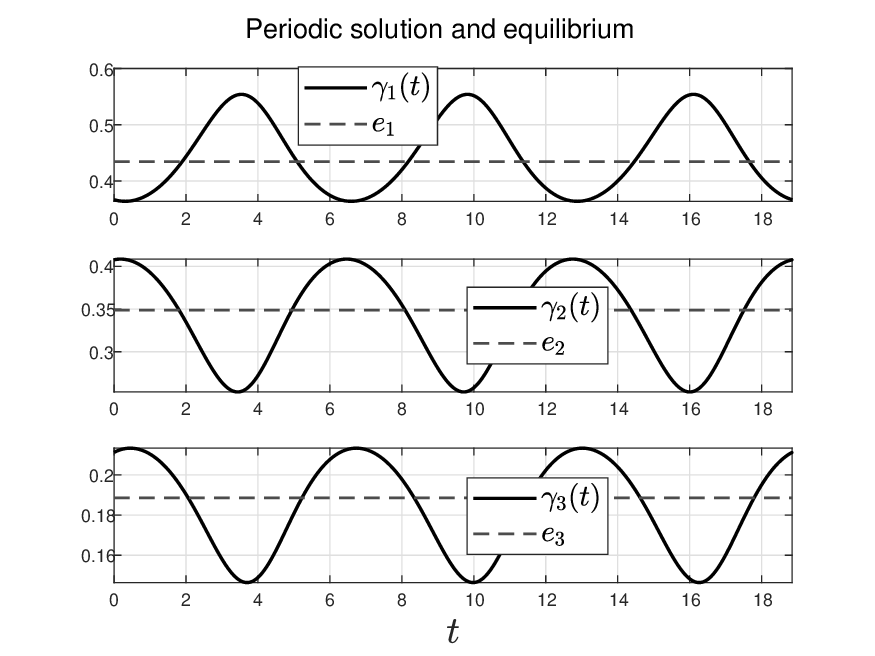}
\caption{  Periodic orbit~$\gamma$ and the equilibrium~$e$
as a function of~$t\in[0,3T)$. }
\label{fig:solplus}
\end{figure}

To demonstrate other quantities defined in the proof of the main result,
we also computed the~$p_i$s 
using~\eqref{eq:p-recursion} and~\eqref{eq:p-normalization}  yielding
\[
p_0= 0.4398,\qquad
p_1=0.3376,\qquad
p_2=0.1807,\qquad
p_3=0.0420  ,
\]
and numerically calculated  the~$d_i$s
(see Figure~\ref{fig:dis}). It may be seen that~$d_i(t)\geq 0$ for all~$i$ and~$t$. A numerical integration   
gives
\[
-\sum_{i=0}^{3}p_i\avg{u_i d_i}=-0.0135,
\]
and this agrees with~\eqref{eq:pivr} up to a small numerical error due to the  integrations.

\begin{figure}[!t]
\centering
\includegraphics[scale=0.6]{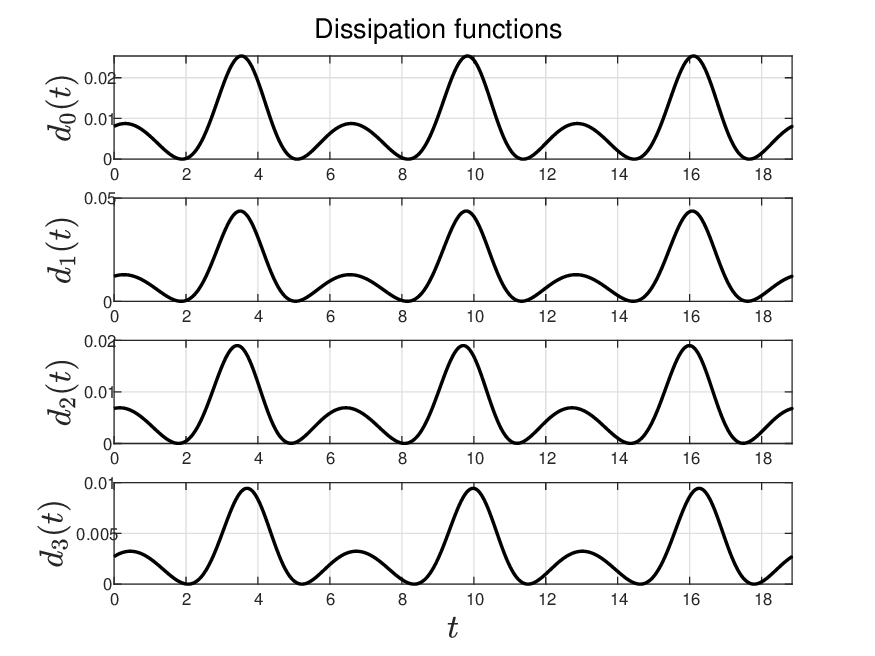}
\caption{ Periodic dissipation functions~$d_0(t),\dots,d_3(t) $
as a function of~$t\in[0,3T)$. }
\label{fig:dis}
\end{figure}

Note that we intentionally chose 
  a    simple numerical example, but Theorem~\ref{thm:main}  shows that~$\GOE(u)$ will be nonpositive  for any
dimension~$n$, any period~$T>0$, and any     positive and continuous $T$-periodic transition rates~$u$, 
 regardless of the  modulation amplitude,
 spatial location, etc.

\section{Discussion}
Many systems and processes 
can be regulated using both constant and periodic controls.  If the system entrains then the GOE provides a rigorous formulation for the following important  question: can 
entrainment also increase the output, on average.

Here, we analyzed~GOE in
the~RFM and proved that, perhaps surprisingly, it is always nonpositive. 
This holds  for every dimension~$n$  of the~RFM and every collection of positive, continuous, and 
jointly $T$-periodic rates. 

The proof is based on a new approach:   $\GOE(u)$
is represented through a weighted sum of edge-wise dissipation terms, with the weights constructed recursively from the equilibrium corresponding to the constant rates~$\avg{u}$.
We also derived a complete characterization of the case where   GOE is zero. This happens  if and only if 
all the transition rates share one common positive scalar modulation of their mean
values. Such a modulation is merely a time reparametrization of the autonomous
constant-rate dynamics. Thus, every genuinely nonuniform periodic modulation
  yields a   negative~GOE.
  Periodic  forcing can  entrain 
  the~RFM to  produce
  coordinated rhythmic behavior, but 
entrainment is not free. 
We refer to this fundamental limitation as the ``cost of entrainment''.
Our results suggest that periodic synchronization and performance enhancement are not automatically compatible.
This is potentially relevant to translation, circadian regulation, periodic resource availability, traffic flow, and more general
nonlinear transport systems. The RFM provides a mathematically tractable setting in which this tradeoff can be proved exactly.

The proof technique raises a broader  question: which nonlinear transport systems exhibit an analogous no-GOE property? Identifying the structural mechanisms underlying such behavior could help distinguish systems in which periodic forcing is intrinsically costly from those in which it can genuinely enhance throughput.



\end{document}